\DeclareMathOperator*{\argmin}{arg\,min}
\pgfplotsset{compat=newest}
\pgfplotsset{plot coordinates/math parser=false,title style={yshift={-1.5mm}}}
\definecolor{blue}{rgb}{0,0,0.55}
\definecolor{green}{rgb}{0,.9,0}
\definecolor{darkred}{rgb}{0.5,0,0}
\definecolor{yellow}{rgb}{.5,.5,0}
\definecolor{linecolor1}{rgb}{0,0,0.4}
\definecolor{linecolor2}{rgb}{1,0.55,0}
\definecolor{linecolor3}{rgb}{0.1,1,1}
\definecolor{linecolor4}{rgb}{0.6,0,0.5}
\newcommand{\recmt}[1]{}
\newcommand{\cmt}[1]{}
\newcommand{\figurecaptionreduction}{\vspace{-2.5mm}}
\newcommand{\T}{^{\hspace{-0.1mm}\scriptscriptstyle \mathsf{T}}\hspace{-0.2mm}}
\newcommand{\normt}[1]{\begin{Vmatrix}#1\end{Vmatrix}_2}
\newcommand{\norm}[1]{\begin{Vmatrix}#1\end{Vmatrix}}
\newcommand{\inspace}[1]{\in \mathbb{R}^{#1}}
\newcommand{\A}{\Phi}
\newcommand{\y}{y}
\newcommand{\PI}{\left(\A \hspace{-0.2mm}\T\hspace{-0.1mm}\A\right)^{\hspace{-0.4mm}-1} \hspace{-1mm} \A\hspace{-0.3mm}\T}
\newcommand{\w}{k}
\newcommand{\N}{\mathcal{N}}
\newcommand{\minimize}[1]{\underset{#1}{\text{minimize} }}
\newcommand{\subjto}{\text{subject to }}
\renewcommand{\vec}[1]{\operatorname{vec}{(#1)}}
\newcommand{\bmatrixx}[1]{\begin{bmatrix}#1\end{bmatrix}}
\definecolor{linecolor1}{rgb}{0,0,0.4}
\definecolor{linecolor2}{rgb}{1,0.55,0}
\definecolor{linecolor3}{rgb}{0.1,1,1}
\definecolor{linecolor4}{rgb}{0.6,0,0.5}
\title{\LARGE \bf
Identification of LTV Dynamical Models with\\ Smooth or Discontinuous Time Evolution \\by means of Convex Optimization
}
\author{
\centering Fredrik Bagge Carlson* \quad Anders Robertsson \quad Rolf Johansson
\thanks{*Open-source implementations of all presented methods and examples in this paper are made available at \href{github.com/baggepinnen/LTVModels.jl}{github.com/baggepinnen/LTVModels.jl} (Will be made available in advance of paper publication). The reported research was supported by the European Commission under the Framework Programme Horizon 2020 under grant agreement 644938 SARAFun. The authors are members of the LCCC Linnaeus Center and the eLLIIT Excellence Center at Lund University, Dept Automatic Control, Lund Sweden.\protect\\
{Fredrik.Bagge\_Carlson@control.lth.se}}
}
\crefname{section}{Sec.}{Sections}
\Crefname{section}{Section}{Sections}
\crefname{proposition}{Proposition}{Propositions}
\Crefname{proposition}{Proposition}{Propositions}
\crefname{corollary}{Corollary}{Corollaries}
\Crefname{corollary}{Corollary}{Corollaries}
\crefname{theorem}{Theorem}{Theorems}
\Crefname{theorem}{Theorem}{Theorems}
\begin{document}
\newtheorem{proposition}{Proposition}
\newtheorem{corollary}{Corollary}
\newtheorem{theorem}{Theorem}
\newlength\figureheight
\newlength\figurewidth
\setlength{\figurewidth}{0.4\textwidth}
\setlength{\figureheight }{4cm }

\maketitle
\thispagestyle{empty}
\pagestyle{empty}

\begin{abstract}
    We establish a connection between trend filtering and system identification which results in a family of new identification methods for linear, time-varying (LTV) dynamical models based on convex optimization. We demonstrate how the design of the cost function promotes a model with either a continuous change in dynamics over time, or causes discontinuous changes in model coefficients occurring at a finite (sparse) set of time instances. We further discuss the introduction of priors on the model parameters for situations where excitation is insufficient for identification. The identification problems are cast as convex optimization problems and are applicable to, e.g., ARX models and state-space models with time-varying parameters. We illustrate usage of the methods in simulations of jump-linear systems, a nonlinear robot arm with non-smooth friction and stiff contacts as well as in model-based, trajectory centric reinforcement learning on a smooth nonlinear system.
\end{abstract}

\section{Introduction}\label{sec:introduction}

The difficulty of the task of identifying time-varying dynamical models of systems varies greatly with the model considered and the availability of measurements of the state sequence. For smoothly changing dynamics, linear in the parameters, the recursive least-squares algorithm with exponential forgetting (RLS$\lambda$) is a common option. If a Gaussian random-walk model for the parameters is assumed, a Kalman filtering/smoothing algorithm \cite{rauch1965maximum} gives the filtering/smoothing densities of the parameters in closed form. The assumption of smoothly (Gaussian) varying dynamics is often restrictive. Discontinuous dynamics changes occur, for instance, when an external controller changes operation mode, when a sudden contact between a robot and its environment is established, an unmodeled disturbance enters the system or when a system is suddenly damaged.

Identification of systems with non-smooth dynamics evolution has been studied extensively. The book~\cite{costa2006discrete} treats the case where the dynamics are known, but the state sequence unknown, i.e., state estimation. In~\cite{nagarajaiah2004time}, the authors examine the residuals from an initial constant dynamics fit to determine regions in time where improved fit is needed by the introduction of additional constant dynamics models. Results on identifiability and observability in jump-linear systems in the non-controlled (autonomous) setting are available in~\cite{vidal2002observability}. The main result on identifiability in \cite{vidal2002observability} was a rank condition on a Hankel matrix constructed from the collected output data, similar to classical results on the least-squares identification of ARX models which appears as rank constraints on the, typically Toeplitz or block-Toeplitz, regressor matrix. Identifiability of the methods proposed in this article are discussed in~\cref{sec:identifiability}.

An LTV model can be seen as a first-order approximation of the dynamics of a nonlinear system around a trajectory. We emphasize that such an approximation will in general fail to generalize far from the this trajectory, but many methods in reinforcement learning and control make efficient use of the linearized dynamics for optimization, while ensuring validity of the approximation by constraints or penalty terms. An example provided in \cref{sec:rl} highlights such a method.

An important class of identification methods that has been popularized lately is \emph{trend filtering} methods~\cite{kim2009ell_1, tibshirani2014adaptive}. Trend filtering methods work by specifying a \emph{fitness criterion} that determines the goodness of fit, as well as a \emph{regularization} term, often chosen with sparsity promoting qualities. As a simple example, consider the reconstruction $\hat y$ of a noisy signal $y = \{y_t\inspace{}\}_{t=1}^T$ with piecewise constant segments. To this end, we may formulate and solve the convex optimization problem
\begin{equation} \label{eq:tf}
   \minimize{\hat{y}} \normt{y-\hat{y}}^2 + \lambda\sum_t |\hat{y}_{t+1} - \hat{y}_t|
\end{equation}
The first term is the fitness criterion or \emph{loss function}, whereas the second term is a sparsity-promoting regularizer which promotes small changes between consecutive samples in the reconstructed signal. The sparsity promoting effect of the 1-norm regularizer is well known, and stems from the constant length of the gradient whenever the argument is non-zero~\cite{murphy2012machine}. Compare this to the squared difference, for which the gradient rapidly vanishes as the argument approaches zero. The squared difference will thus promote \emph{small} arguments, whereas the 1-norm promotes \emph{sparse} arguments.

In this work, we will draw inspiration from the trend-filtering literature to develop new system identification methods for LTV models, with interesting properties. In trend filtering, we decompose a curve as a set of polynomial segments. In the identification methods proposed in this work, we instead decompose a multivariable state sequence as the output of a set of LTV models, where the model coefficients evolve as polynomial functions of time. We start by defining a set of optimization problems with a least-squares loss function and carefully chosen regularization terms. We further discuss how prior information can be utilized to increase the accuracy of the identification and end the article with identification of a nonlinear system with non-smooth friction and an example of model-based reinforcement learning followed by a discussion.

\section{LTI identification}\label{sec:lti}

We start by considering the case of identification of the parameters in an LTI model on the form
\begin{equation}
   x_{t+1} = A x_t + B u_t + v_t, \quad t \in [1,T]
\end{equation}
where $x\inspace{n}$, $u\inspace{m}$ are the state and input respectively. A discussion around the noise term $v_t$ is deferred until~\cref{sec:general}, where we indicate how statistical assumptions on $v_t$ influence the cost function and the properties of the estimate. If the state and input sequences are known, a plethora of methods for estimating the parameters exists. A common method for systems that are linear in the parameters is the least-squares (LS) method, which in case of Gaussian noise, $v$, coincides with the maximum likelihood (ML) estimate. To facilitate estimation using the LS method, we write the model on the form $\y = \A\w$, and arrange the data according to
\begin{align*}
   \y &=
   \begin{bmatrix}
      {x_1} \\ \vdots \\ {x_T}
   \end{bmatrix} & &\inspace{Tn} \\
   \w &= \vec{\bmatrixx{A\T & B\T}} & &\inspace{K}\\[0.2em]
   \A &=
   \begin{bmatrix}
      I_n \otimes x_0\T & I_n \otimes u_0\T \\
      \vdots & \vdots\\
      I_n \otimes x_{T-1}\T & I_n \otimes u_{T-1}\T
   \end{bmatrix}
   & &\in \mathbb{R}^{Tn\times K}
\end{align*}
where $\otimes$ denotes the Kronecker product and $K=n^2+nm$ is the number of model parameters, and solve the optimization problem \labelcref{eq:lscost} with closed-form solution \labelcref{eq:ls}.
\begin{align}
   \w^* &= \argmin_{\w} \normt{\A \w - \y}^2 \label{eq:lscost}\\
   ~ &= \PI \y \label{eq:ls}
\end{align}

\section{Time-varying dynamics}
We now move on to the contribution of this work, and extend our view to systems where the dynamics change with time. We limit the scope of this article to models on the form
\begin{equation}
\label{eq:tvk}
\begin{split}
   x_{t+1} &= A_t x_t + B_t u_t + v_t\\
   \w_t &= \vec{\bmatrixx{A_t\T & B_t\T}}
\end{split}
\end{equation}
where the parameters $\w$ are assumed to evolve according to the dynamical system
\begin{equation}
    \label{eq:dynsys}
\begin{split}
   k_{t+1} &= H_t k_t + w_t\\
   y_t &= \big(I_n \otimes \bmatrixx{x_t\T & u_t\T}\big) \w_t
\end{split}
\end{equation}
where, if no prior knowledge is available, the dynamics matrix $H_t$ can be taken as the identity matrix; $H = I$ implies that the model coefficients follow a random walk dictated by the properties of $w_t$, i.e., the state transition density function $p_w(k_{t+1}|k_t)$. The emission density function $p_v(x_{t+1} | x_t, u_t, k_t)$ is determining the drift of the state, which for the parameter estimation problem can be seen as the distribution of measurements, given the current state of the system.
We emphasize here that the state in the parameter evolution model refers to the current parameters $k_t$ and not the system state $x_t$, hence, $p_v$ is called the emission density and not the transition density. Particular choices of $p_v$ and $p_w$ emit data likelihoods concave in the parameters and hence amenable to convex optimization.

The following sections will introduce a number of optimization problems with different regularization functions, corresponding to different choices of $p_w$, and different regularization arguments, corresponding to different choices of $H$. We also discuss the quality of the identification resulting from the different modeling choices.

\subsection{Low frequency time evolution}
A slowly varying signal is characterized by \emph{small first-order time differences}.
To identify slowly varying dynamics parameters, we thus penalize the squared 2-norm of the first-order time difference of the model parameters, and solve the optimization problem
\begin{equation} \label{eq:slow}
   \minimize{\w} \normt{\y-\hat{\y}}^2 + \lambda^2\sum_t \normt{\w_{t+1} - \w_{t}}^2
\end{equation}
where $\sum_t$ denotes the sum over relevant indices $t$, in this case $t\in [1,T-1]$.
This optimization problem has a closed form solution given by
\begin{align}\label{eq:closedform}
    \tilde{\w}^* &= (\tilde{\A}\T\tilde{\A} + \lambda^2 D_1\T D_1)^{-1}\tilde{\A}\T \tilde{Y}\\
    \tilde{\w} &= \operatorname{vec}(\w_1, ...\,, \w_T)\nonumber
\end{align}
where $\tilde{\A}$ and $\tilde{Y}$ are appropriately constructed matrices and the first-order differentiation operator matrix $D_1$ is constructed such that $\lambda^2\normt{D_1 \tilde{\w}}^2$ equals the second term in~\labelcref{eq:slow}.
The computational complexity $\mathcal{O}\big((TK)^3\big)$ of computing $k^*$ using the closed-form solution \labelcref{eq:closedform} becomes prohibitive for all but toy problems. We note that the cost function in \labelcref{eq:slow} is the negative data log-likelihood of a Brownian random-walk parameter model with $H=I$, which motivates us to develop a dynamic programming algorithm based on a Kalman smoother, detailed in~\cref{sec:kalmanmodel}.

\subsection{Smooth time evolution}
A smoothly varying signal is characterized by \emph{small second-order time differences}.
To identify smoothly time-varying dynamics parameters, we thus penalize the squared 2-norm of the second-order time difference of the model parameters, and solve the optimization problem
\begin{equation} \label{eq:smooth}
   \minimize{\w} \normt{\y-\hat{\y}}^2 + \lambda^2\sum_t \normt{\w_{t+2} -2 \w_{t+1} + \w_t}^2
\end{equation}
Also this optimization problem has a closed form solution on the form~\labelcref{eq:closedform} with the corresponding second-order differentiation operator $D_2$. \Cref{eq:smooth} is the negative data log-likelihood of a Brownian random-walk parameter model with added momentum and $H$ derived in~\cref{sec:smoothkalman}, where a Kalman smoother with augmented state is developed to find the optimal solution. We also extend problem \labelcref{eq:smooth} to more general regularization terms in \cref{sec:kalmanmodel}.

\subsection{Piecewise constant time evolution}\label{sec:pwconstant}
In the presence of discontinuous or abrupt changes in the dynamics, estimation method~\labelcref{eq:smooth} might perform poorly. A signal which is mostly flat, with a small number of distinct level changes, is characterized by a \emph{sparse first-order time difference}. To detect sudden changes in dynamics, we thus formulate and solve the problem
\begin{equation} \label{eq:pwconstant}
   \minimize{\w} \normt{\y-\hat{\y}}^2 + \lambda\sum_t \normt{ \w_{t+1} - \w_t}
\end{equation}
We can give \labelcref{eq:pwconstant} an interpretation as a \emph{grouped-lasso} cost function, where instead of groups being formed out of variables, our groups are defined by differences between variables. We thus have a penalty on the 1-norm on the \emph{length} of the difference vectors $\w_{t+1} - \w_t$ since $\norm{\normt{\cdot}}_1 = \normt{\cdot}$.
The 1-norm is a \emph{sparsity-promoting} penalty, hence a solution in which only a small number of non-zero first-order time differences in the model parameters is favored, i.e., a piecewise constant dynamics evolution.
At a first glance, one might consider the formulation
\begin{equation} \label{eq:pwconstant_naive}
   \minimize{\w} \normt{\y-\hat{\y}}^2 + \lambda\sum_t \norm{\w_{t+1} - \w_t}_1
\end{equation}
which results in a dynamics evolution with sparse changes in the coefficients, but changes to different entries of $\w_t$ are not necessarily occurring at the same time instants. The formulation~\labelcref{eq:pwconstant}, however, promotes a solution in which the change occurs at the same time instants for all coefficients in $A$ and $B$, i.e., $\w_{t+1} = \w_{t}$ for most $t$.

\subsubsection{Implementation}
Due to the non-squared norm penalty $\sum_t \normt{ \w_{t+1} - \w_t}$, problem \labelcref{eq:pwconstant} is significantly harder to solve than \labelcref{eq:smooth}. An efficient implementation using the linearized ADMM algorithm \cite{parikh2014proximal} is made available in the accompanying repository.

\subsection{Piecewise constant time evolution with known number of steps}

If the number of switches in dynamics parameters, $M$, is known in advance, the optimal problem to solve is
\begin{align} \label{eq:l0}
   & \minimize{\w} &  & \normt{\y-\hat{\y}}^2\\
   & \subjto &        & \sum_t \textbf{1}\{ \w_{t+1} \neq \w_t\} \leq M
\end{align}
where $\textbf{1}\{\cdot\}$ is the indicator function.
This problem is non-convex and we propose solving it using dynamic programming (DP). For this purpose we modify the algorithm developed in \cite{bellman1961approximation}, an algorithm frequently referred to as segmented least-squares~\cite{bellman1969curve}, where a curve is approximated by piecewise linear segments. The modification lies in the association of each segment (set of consecutive time indices during which the parameters are constant) with a dynamics model, as opposed to a simple straight line.\footnote{Indeed, if a simple integrator is chosen as dynamics model and a constant input is assumed, the result of our extended algorithm reduces to the segmented least-squares solution.} Unfortunately, the computational complexity of the dynamic programming solution, $\mathcal{O}(T^2K^3)$, becomes prohibitive for large $T$.\footnote{For details regarding the DP algorithm and implementation, the reader is referred to the source-code repository accompanying this article.}

\subsection{Piecewise linear time evolution}\label{sec:pwlinear}
A piecewise linear signal is characterized by a \emph{sparse second-order time difference}, i.e., it has a small number of changes in the slope. A piecewise linear time-evolution of the dynamics parameters is hence obtained if we solve the optimization problem.
\begin{equation} \label{eq:pwlinear}
   \minimize{\w} \normt{\y-\hat{\y}}^2 + \lambda\sum_t \normt{\w_{t+2} -2 \w_{t+1} + \w_t}
\end{equation}

\subsection{Summary}
The proposed optimization problems are summarized in~\cref{tab:opts}. The table illustrates how the choice of regularizer and order of time-differentiation of the parameter vector affects the quality of the resulting solution.

\begin{table}[]
\centering
\caption{Summary of optimization problem formulations. $D_n$ refers to parameter vector time-differentiation of order $n$.}
\label{tab:opts}
\begin{tabular}{@{}lll@{}}
\toprule
Norm & $D_n$ & Result    \\ \midrule
1    & 1         & Small number of steps (piecewise constant)     \\
1    & 2         & Small number of bends (piecewise affine)    \\
2    & 1         & Small steps (slowly varying) \\
2    & 2         & Small bends (smooth)  \\ \bottomrule
\end{tabular}
\end{table}

\subsection{Two-step refinement}

Since many of the proposed formulations of the optimization problem penalize the size of the changes to the parameters, solutions in which the changes are slightly underestimated are favored. To mitigate this issue, a two-step procedure can be implemented where in the first step, change points (knots) are identified. In the second step, the penalty on the one-norm is removed and equality constraints are introduced between consecutive time-indices for which no change in dynamics was indicated by the first step.

The second step can be computed very efficiently by noticing that the problem can be split into several identical sub-problems at the knots identified in the first step. The sub-problems have closed-form solutions if the problem in~\cref{sec:pwconstant} is considered.

To identify the points at which the dynamics change, we observe the argument inside the sum of the regularization term, i.e., $a_{t1} = \normt{ \w_{t+1} - \w_t}$ or $a_{t2} = \normt{\w_{t+2} -2 \w_{t+1} + \w_t}$. Time instances where $a_t$ is taking non-zero values indicate change points.

\section{Dynamics prior and Kalman filtering}
\recmt{Reviewer found this section rushed}
The identifiability of the parameters in a dynamical model hinges on the observability of the dynamics system~\labelcref{eq:dynsys}, or more explicitly, only modes excited by the input $u$ will be satisfactorily identified. If the identification is part of an iterative learning and control scheme, e.g., ILC or reinforcement learning, it might be undesirable to introduce additional noise in the input to improve excitation for identification. This section will introduce prior information about the dynamics which mitigates the issue of poor excitation of the system modes. The prior information might come from, e.g., a nominal model known to be inaccurate, or an estimated global model such as a Gaussian mixture model (GMM). A statistical model of the joint density $p(x_{t+1},x_t,u_t)$ constructed from previously collected tuples $(x_{t+1},x_t,u_t)$ provides a dynamical model of the system through the conditional pdf $p(x_{t+1}|x_t,u_t)$.

We will see that for priors from certain families, the resulting optimization problem remains convex. For the special case of a Gaussian prior over the dynamics parameters or the output, the posterior mean of the parameter vector is conveniently obtained from a Kalman-smoothing algorithm, modified to include the prior.

\subsection{General case}\label{sec:general}

If we introduce a parameter state $\w$ (c.f., \labelcref{eq:tvk}) and a prior over all parameter-state variables $p(\w_{t}|z_t)$, where the variable $z_t$ might be, for instance, the time index $t$ or state $x_t$, we have the data log-likelihood
\begin{equation}\label{eq:lcdll}
    \begin{split}
    \log p(\w,y|x,z)_{1:T} &= \sum_{t=1}^T \log p(y_t|\w_t,x_t) \\
    + \sum_{t=1}^{T-1} \log p(\w_{t+1}|\w_t) &+ \sum_{t=1}^{T} \log p(\w_{t}|z_t)
\end{split}
\end{equation}
which factors conveniently due to the Markov property of a state-space model.
For particular choices of density functions in~\labelcref{eq:lcdll}, notably Gaussian and Laplacian, the negative likelihood function becomes convex. The next section will elaborate on the Gaussian case and introduce a recursive algorithm that solves for the full posterior efficiently. The Laplacian case, while convex, does not admit an equally efficient algorithm, but is more robust to outliers in the data.

\subsection{Gaussian case} \label{sec:kalmanmodel}
If all densities in~\labelcref{eq:lcdll} are Gaussian and $\w$ is modeled with the Brownian random walk model \labelcref{eq:dynsys} (Gaussian $v_t$), \labelcref{eq:lcdll} can be written on the form (scaling constants omitted)
\begin{equation}
    \begin{split}\label{eq:prioropt}
    -\log p(\w,y|x,z)_{1:T} &=
    \sum_{t=1}^T \norm{y_t - \hat{y}(\w_t,x_t)}^2_{\Sigma^{-1}_y} \\
    &+ \sum_{t=1}^{T-1} \norm{\w_{t+1} - \w_{t}}^2_{\Sigma^{-1}_\w} \\
    &+ \sum_{t=1}^{T} \norm{\mu_0(z_t) - \w_{t}}^2_{\Sigma^{-1}_0(z_t)}
    \end{split}
\end{equation}
for some function $\mu_0(z_t)$ which produces the prior mean of $\w$ given $z_t$. $\Sigma_y, \Sigma_\w, \Sigma_0(z_t)$ are the covariance matrices of the state-drift, parameter drift and prior respectively and $\norm{x}_{\Sigma^{-1}}^2 = x\T\Sigma^{-1}x$.

In this special case, we introduce a recursive solution given by a modified Kalman smoothing algorithm, where the conditional mean of the state is updated with the prior. Consider the standard Kalman filtering equations, reproduced here to establish the notation
    \begin{align}
        \hat{x}_{t|t-1} &= A \hat{x}_{t-1|t-1} + B u_{t-1} \\
        P_{t|t-1} &= A P_{t-1|t-1}A\T + R_1\\
        K_t &= P_{t|t-1}C\T\big(CP_{t|t-1}C\T+R_2\big)^{-1}\\
        \hat{x}_{t|t} &= \hat{x}_{t|t-1} + K_t\big(y_t-C\hat{x}_{t|t-1}\big)\\
        P_{t|t} &= P_{t|t-1} - K_t C P_{t|t-1}
    \end{align}
    where $x$ is the state vector, with state-drift covariance $R_1$ and $C$ is a matrix that relates $x$ to a measurement $y=Cx$ with covariance $R_2$. The first two equations constitute the \emph{prediction} step, and the last two equations incorporate the measurement $y_t$ in the \emph{correction} step.
    The modification required to incorporate a Gaussian prior on the state variable $p(x_t|v_t) = \N(\mu_0(v_t), \Sigma_0(v_t))$ involves a repeated correction step and takes the form
    \begin{align}
        \bar{K}_t &= P_{t|t}\big(P_{t|t}+\Sigma_0(v_t)\big)^{-1}\label{eq:postk}\\
        \bar{x}_{t|t} &= \hat{x}_{t|t} + \bar{K}_t\big(\mu_0(v_t)-\hat{x}_{t|t}\big)\label{eq:postx}\\
        \bar{P}_{t|t} &= P_{t|t} - \bar{K}_t P_{t|t}\label{eq:postcov}
    \end{align}
    where $\bar{\cdot}$ denotes the posterior value. This additional correction can be interpreted as receiving a second measurement $\mu_0(v_t)$ with covariance $\Sigma_0(v_t)$.
    For the Kalman-smoothing algorithm, $\hat{x}_{t|t}$ and $P_{t|t}$ in \labelcref{eq:postx,eq:postcov} are replaced with $\hat{x}_{t|T}$ and $P_{t|T}$.

A prior over the output of the system, or a subset thereof, is straight forward to include in the estimation by means of an extra update step, with $C,R_2$ and $y$ being replaced with their corresponding values according to the prior.

\subsection{Kalman filter for identification}
We can employ the Kalman-based algorithm to solve two of the proposed optimization problems:
\subsubsection{Low frequency}
The Kalman smoother can be used for solving identification problems like~\labelcref{eq:slow} by noting that~\labelcref{eq:slow} is the negative log-likelihood of the dynamics model~\labelcref{eq:dynsys}. The identification problem is thus reduced to a standard state-estimation problem.
\subsubsection{Smooth}\label{sec:smoothkalman}

To develop a Kalman-filter based algorithm for solving~\labelcref{eq:smooth}, we augment the model \labelcref{eq:dynsys} with the state variable $k^\prime_t = k_{t} - k_{t-1}$ and note that $k^\prime_{t+1} - k^\prime_t = k_{t+1} - 2k_t + k_{t-1}$. We thus introduce the augmented-state model
\begin{align}\label{eq:dynsys2}
   \begin{bmatrix}k_{t+1} \\ k^\prime_{t+1}\end{bmatrix} &= \begin{bmatrix}I_K & I_K \\ 0_K & I_K\end{bmatrix} \begin{bmatrix}k_{t} \\ k^\prime_{t}\end{bmatrix} + \begin{bmatrix}0 \\ w_{t}\end{bmatrix}\\
   y_t &=  \begin{bmatrix}\big(I_n \otimes \bmatrixx{x_t\T & u_t\T}\big) & 0\end{bmatrix} \begin{bmatrix}k_{t} \\ k^\prime_{t}\end{bmatrix}
\end{align}
which is on a form suitable for filtering/smoothing with the machinery developed above.

\subsubsection{General case}
The Kalman-filter based identification method can be generalized to solving optimization problems where the argument in the regularizer appearing in \labelcref{eq:smooth} is replaced by a general linear operation on the parameter vector, $P(z)k$, and we have the following proposition
\begin{proposition}
    Any optimization problem on the form
    \begin{equation} \label{eq:generalkalman}
       \minimize{\w} \normt{\y-\hat{\y}}^2 + \lambda^2\sum_t \normt{P(z)\w_t}^2
    \end{equation}
    where $P(z)$ is a polynomial of degree $n>0$ in the time difference operator $z$ with $z^{-n} P(1) = 0$, can be solved with a Kalman smoother employed to an autonomous state-space system.
\end{proposition}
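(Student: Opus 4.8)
\emph{Proof plan.} The plan is to exhibit the objective in~\labelcref{eq:generalkalman} as (minus) the log-likelihood of a linear--Gaussian state-space model with \emph{no exogenous input}, so that its minimizer coincides with the smoothed mean trajectory, which an RTS/Kalman smoother computes exactly in $\mathcal{O}(T)$ operations.

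\textbf{Step 1 (probabilistic reading of the objective).} First I would stack $\hat y_t = \big(I_n\otimes\bmatrixx{x_t\T & u_t\T}\big)\w_t =: C_t\w_t$ and read the two terms of~\labelcref{eq:generalkalman} as Gaussian negative log-densities: a measurement equation $y_t = C_t\w_t + e_t$ with $e_t\sim\N(0,\tfrac{1}{2} I)$, which accounts for $\sum_t\normt{y_t-C_t\w_t}^2$, and a ``process'' relation $P(z)\w_t = w_t$ with $w_t\sim\N(0,\tfrac{1}{2\lambda^2}I)$, which accounts for $\lambda^2\sum_t\normt{P(z)\w_t}^2$. Up to a common positive scale and additive constants, $-\log p(\w_{1:T},y_{1:T})$ then equals the objective in~\labelcref{eq:generalkalman}, provided the running index set of $t$ in the regularizer ($t\in[1,T-n]$) is matched to the one produced by the realization in Step 2 and the first $n$ parameter vectors are given a flat (improper) prior.

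\textbf{Step 2 (the autonomous realization).} Normalizing $P$ to be monic of degree $n$, the hypothesis $z^{-n}P(1)=0$ says $P(1)=0$, i.e.\ $(z-1)\mid P(z)$; write $P(z)=Q(z)(z-1)$ with $\deg Q = n-1$. I would then introduce $\delta_t := \w_{t+1}-\w_t$ and the augmented state $\xi_t := \vec{\w_t,\delta_t,\delta_{t+1},\dots,\delta_{t+n-2}}\inspace{nK}$: here $\w$ integrates $\delta$, and $Q(z)\delta_t = P(z)\w_t = w_t$ is an order-$(n-1)$ vector AR recursion whose companion form lets $w_t$ enter only the last block. This yields a \emph{constant}, block-upper-triangular transition matrix $F$, input matrix $G=\bmatrixx{0 & \cdots & 0 & I_K}\T$, and output matrix $\bmatrixx{C_t & 0 & \cdots & 0}$ picking out the $\w_t$ block --- for $n=2$ this collapses to~\labelcref{eq:dynsys2}. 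The system $\xi_{t+1}=F\xi_t+Gw_t$, $y_t=\bmatrixx{C_t & 0}\xi_t+e_t$ is autonomous: the known signals $x_t,u_t$ enter only through the time-varying output matrix $C_t$, never the state recursion. (A plain companion realization on $(\w_t,\dots,\w_{t+n-1})$ works as well; the integrator form is preferred because it makes visible that the null space of $P(z)$ --- constant parameter trajectories --- is left unpenalized, which is exactly what the hypothesis on $P$ buys.)

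\textbf{Step 3 (conclusion, and where the work is).} Since the model is linear--Gaussian, the RTS/Kalman smoother (augmented with the prior-update~\labelcref{eq:postk,eq:postx,eq:postcov} of~\cref{sec:kalmanmodel} when a prior is present) returns the exact posterior $\hat\xi_{t|T}$; for a quadratic negative log-likelihood the posterior mean is the global minimizer, so the leading $K$ components of the $\hat\xi_{t|T}$ form a minimizer of~\labelcref{eq:generalkalman} (unique under the excitation/identifiability conditions of~\cref{sec:identifiability}). The step I expect to be the main obstacle is the boundary bookkeeping: one must verify that the $T-n$ process terms $\normt{w_t}^2$ emitted by the factorization, together with the improper prior on $\xi_1$ (equivalently on $\w_1,\dots,\w_n$) and the way the last $n-1$ measurements read off later blocks of the terminal state, reproduce the regularizer's sum \emph{exactly} --- with no spurious or missing end penalties --- and that the degree condition $\deg P = n>0$ keeps the AR order $n-1$ of the $\delta$-recursion, hence the state dimension $nK$, well defined. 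Matching $\tfrac{1}{2}I$ and $\tfrac{1}{2\lambda^2}I$ to $R_2$ and $R_1$ finally makes the smoothed trajectory numerically equal to the optimizer. $\square$
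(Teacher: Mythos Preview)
Your proposal is correct and follows the same high-level route as the paper: interpret the cost as the negative log-likelihood of a linear--Gaussian state-space model driven by $P(z)\w_t=w_t$, then invoke the Kalman/RTS smoother to obtain the MAP (= quadratic) minimizer. The paper's proof is terser --- it defines $P^*(z^{-1})=z^{-n}P(z)$, normalizes $P^*(0)=1$, and appeals to realization theory for $Q=1/P^*$ --- whereas you build the realization explicitly by factoring $P(z)=(z-1)Q(z)$ and augmenting the state with the differences $\delta_t$; for $n=2$ both collapse to~\labelcref{eq:dynsys2}. Your construction makes the role of the hypothesis $P(1)=0$ transparent (it is what allows the integrator block and leaves constant parameter trajectories unpenalized), a point the paper's transfer-function argument leaves implicit, and your attention to the boundary/initialization bookkeeping is more careful than the paper's.
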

\begin{proof}
    Let $P^*(z^{-1}) = z^{-n} P(z)$. We assume without loss of generality that $P^*(0) = 1$ since any constant $P^*(0)$ can be factored out of the polynomial. $Q(z^{-1}) = P^*(z^{-1})^{-1}$ is a strictly proper transfer function and has a realization as a linear, Gaussian state-space system of degree $n$. Since $Q(z^{-1})$ is strictly proper, the realization has no direct term.
    The negative data log-likelihood of $Q(z^{-1})$ is equal, up to constants idenpendent of $\w$, to the cost function in~\cref{eq:generalkalman}, hence the Kalman smoother applied to $Q$ optimizes \cref{eq:generalkalman}.
\end{proof}
For~\labelcref{eq:smooth} $P(z)$ equals $z^2 - 2z + 1$ and $Q(z^{-1})$ has a realization on the form~\labelcref{eq:dynsys2}.

\section{Well-posedness and identifiability}\label{sec:identifiability}
To assess the well-posedness of the proposed identification methods, we start
by noting that the problem of finding $A$ in $x_{t+1} = Ax_t$ given a pair
$(x_{t+1},x_t)$ is an ill-posed problem in the sense that the solution is non
unique. If we are given several pairs $(x_{t+1},x_t)$, for different $t$, while
$A$ remains constant, the problem becomes over-determined and well-posed in the
least-squares sense, provided that the vectors $\{x_t^{(i)}\}_{t=1}^T$ span $\mathbb{R}^n$. The LTI-case in~\cref{sec:lti} is well posed according to
classical results, when $\Phi$ has full column rank.

When we extend our view to LTV models, the number of free
parameters is increased significantly, and the corresponding
regressor matrix $\tilde{\A}$ will never have full column
rank and the introduction of a regularization term is necessary.
Informally, for every $n$ measurements, we have $K=n^2+nm$
free parameters. If we consider the identification problem
of~\cref{eq:pwconstant} and let
$\lambda \rightarrow \infty$, the regularizer terms essentially becomes equality constraints.
This will enforce a solution in which all parameters
in $k$ are constant over time, and the problem reduces
to the LTI-problem. As $\lambda$ decreases,
the effective number of free parameters increases until the
problem gets ill-posed for $\lambda = 0$.
We formalize the above arguments as
\begin{proposition}
    Optimization problems \labelcref{eq:slow,eq:pwconstant} have unique global minima for $\lambda > 0$ if and only if the corresponding LTI optimization problem has a unique solution.
\end{proposition}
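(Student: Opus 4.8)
The plan is to reduce both implications to one linear‑algebraic fact about the \emph{constant‑in‑time subspace} $\mathcal{C} := \ker D_1 = \{\tilde{\w} : \w_1 = \dots = \w_T\} \subset \mathbb{R}^{TK}$, canonically identified with $\mathbb{R}^K$. Writing $\tilde{\A}$ as the block‑diagonal regressor with diagonal blocks $\Psi_t = I_n\otimes\bmatrixx{x_t\T & u_t\T}$, the restriction of $\tilde{\A}$ to $\mathcal{C}$ is, under this identification, exactly the LTI regressor $\A$ (its rows are the same blocks $\Psi_t$). Hence ``the LTI problem has a unique minimizer'' $\iff$ ``$\A$ is of full column rank'' $\iff$ ``$\ker\tilde{\A}\cap\ker D_1 = \{0\}$'', and this last equality is the only hypothesis I use below.

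For the ``only if'' part (LTV uniqueness $\Rightarrow$ LTI uniqueness) I argue by contraposition, uniformly for \labelcref{eq:slow} and \labelcref{eq:pwconstant} (and, with the same construction, for \labelcref{eq:smooth} and \labelcref{eq:pwlinear}). If $\A$ is not of full column rank, pick $0\ne\xi\in\ker\A$ and let $\tilde{\xi}\in\mathcal{C}$ be the trajectory with $\w_t=\xi$ for all $t$. Then $\tilde{\A}\tilde{\xi}=0$, so replacing $\tilde{\w}$ by $\tilde{\w}+\tilde{\xi}$ leaves $\hat{\y}$, hence the loss $\normt{\y-\hat{\y}}^2$, unchanged; and $D_1\tilde{\xi}=0$ (likewise $D_2\tilde{\xi}=0$), so every first‑ or second‑difference regularizer is unchanged as well. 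Thus any minimizer $\tilde{\w}^*$ gives a distinct second minimizer $\tilde{\w}^*+\tilde{\xi}$, and the LTV minimizer is not unique.

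For the ``if'' part I handle the two costs separately. For \labelcref{eq:slow} the objective is a convex quadratic whose Hessian is proportional to $\tilde{\A}\T\tilde{\A} + \lambda^2 D_1\T D_1$, a sum of positive‑semidefinite matrices; it is positive definite precisely when $\ker\tilde{\A}\cap\ker D_1 = \{0\}$, which is the hypothesis, and a positive‑definite Hessian gives a unique global minimizer. For \labelcref{eq:pwconstant} I would first note that the same hypothesis makes $f(\tilde{\w}) = \normt{\y-\hat{\y}}^2 + \lambda\sum_t\normt{\w_{t+1}-\w_t}$ coercive — a direction along which the quadratic term stays bounded lies in $\ker\tilde{\A}$, one along which the regularizer stays bounded lies in $\ker D_1$ — so a minimizer $\tilde{\w}^*$ exists; uniqueness then follows once $f$ is shown to have an isolated minimum at $\tilde{\w}^*$ along every line. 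Along a direction $v\notin\ker\tilde{\A}$ the quadratic part is strictly convex on the line and we are done; along $v\in\ker\tilde{\A}\setminus\{0\}$ it is constant, and it remains to show that $\tau\mapsto\sum_t\normt{(D_1\tilde{\w}^*)_t + \tau(D_1 v)_t}$ has an isolated minimum at $\tau=0$.

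That last step is the heart of the argument and the main obstacle, because the grouped‑$\ell_1$ penalty is \emph{not} strictly convex, so a ``flat'' direction $v\in\ker\tilde{\A}$ is a priori conceivable: it would have to keep the active set of $\tilde{\w}^*$ (the nonzero blocks of $D_1 v$ being contained in those of $D_1\tilde{\w}^*$), be block‑wise collinear with $D_1\tilde{\w}^*$ there, and have its one‑sided slopes at $\tau=0$ cancel. Ruling this out needs more than $\ker\tilde{\A}\cap\ker D_1=\{0\}$; the extra leverage I would use is the coupling peculiar to identification — the blocks $\Psi_t$ are built from the state/input samples and the state samples also populate the target $\tilde{Y}$ (since $\y$ stacks them) — which rigidly ties the residual to the regressor and should preclude a cancelling flat direction. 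Absent that coupling, the defensible version of the \labelcref{eq:pwconstant} half is uniqueness for generic (sufficiently exciting) data.
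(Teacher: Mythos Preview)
Your treatment of \labelcref{eq:slow} and of the ``only if'' direction is correct and is essentially the paper's argument (the paper speaks of ``intersecting nullspaces of the two Hessians'', which is exactly your condition $\ker\tilde{\A}\cap\ker D_1=\{0\}$). For \labelcref{eq:pwconstant} the paper simply reuses the same Hessian language, implicitly treating the group-lasso penalty as if its only flat directions were $\ker D_1$; you are right to flag this, since a non-smooth convex function can be locally affine along directions outside the kernel of any ``Hessian''.

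Your hesitation is in fact decisive: the ``if'' direction for \labelcref{eq:pwconstant} is false as stated, and the $\y$/regressor coupling you hoped to exploit does not rescue it. Take $n=1$, $m=0$, $T=3$, data $(x_0,x_1,x_2,x_3)=(1,0,1,2)$ and $\lambda=0.1$. Then $\A=(1,0,1)\T$ has full column rank, so the LTI minimizer is unique; but $\tilde{\A}=\diag{1,0,1}$ has $v=(0,1,0)\T\in\ker\tilde{\A}\setminus\ker D_1$, and the cost
\[
k_1^2+1+(2-k_3)^2+\lambda\lvert k_2-k_1\rvert+\lambda\lvert k_3-k_2\rvert
\]
is minimized on the entire segment $k_1=0.05$, $k_3=1.95$, $k_2\in[0.05,1.95]$ (first-order conditions hold, and along the segment the two absolute values sum to the constant $1.9$). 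This is precisely the mechanism you isolated --- active difference blocks collinear with the perturbation, signed contributions cancelling --- and the data obeys the coupling you invoked ($x_1=0$ sits in both $\y$ and the regressor). So the gap you left open cannot be closed without an additional hypothesis; it is a gap in the proposition itself, not in your argument, and the paper's Hessian-based proof glosses over it.
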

\begin{proof}
    The cost function is a sum of two convex terms. For a global minimum to be non-unique, the Hessians of the two terms must have intersecting nullspaces.
    In the limit $\lambda \rightarrow \infty$ the problem reduces to the LTI problem. The nullspace of the regularization Hessian, which is invariant to $\lambda$, does thus not share any directions with the nullspace of $\tilde{\A}\T \tilde{\A}$ which establishes the equivalence of identifiability between the LTI problem and the LTV problems.
\end{proof}
\begin{proposition}
    Optimization problems \labelcref{eq:smooth,eq:pwlinear} with higher order differentiation in the regularization term have unique global minima for $\lambda > 0$ if and only if there exists no vector $v \inspace{n+m}$ such that
    \begin{equation}
        C^{xu}_t v = \bmatrixx{x_tx_t\T & x_tu_t\T \\ u_tx_t\T & u_tu_t\T}v = 0 \;\forall t
    \end{equation}
\end{proposition}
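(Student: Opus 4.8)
The plan is to reduce the claim to a statement about the kernels of the two building blocks of the cost, exactly as in the proof of the preceding proposition. Write $\xi_t=\bmatrixx{x_t\T & u_t\T}\T\inspace{n+m}$, so that $C^{xu}_t=\xi_t\xi_t\T$ and $C^{xu}_t v=0$ is equivalent to $\xi_t\T v=0$. In both \labelcref{eq:smooth} and \labelcref{eq:pwlinear} the data term is the convex quadratic $\normt{\tilde{\A}\tilde{\w}-\tilde{Y}}^2$ (notation as in \labelcref{eq:closedform}) with Hessian $\tilde{\A}\T\tilde{\A}$, and the regularizer argument is the second-difference operator $D_2$, whose kernel is the space of parameter trajectories that are affine in $t$, $\w_t=a+(t-1)b$ with $a,b\inspace{K}$. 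The ``if'' half is uniform and easy: if $v\neq 0$ satisfies $\xi_t\T v=0$ for all $t$, let $\tilde v$ be the constant trajectory that puts $v$ in one output channel of $\w_t$ and zeros in the others; since $\tilde{\A}$ acts blockwise through $\w_t\mapsto(I_n\otimes\xi_t\T)\w_t$ we get $\tilde v\in\ker\tilde{\A}$, and $\tilde v\in\ker D_2$ trivially, so for any minimizer $\tilde{\w}^\star$ the whole line $\tilde{\w}^\star+\varepsilon\tilde v$, $\varepsilon\in\mathbb{R}$, leaves both terms of the cost unchanged and is optimal, hence the minimum is not unique.

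For the converse I would show that non-uniqueness forces such a $v$ to exist. For \labelcref{eq:smooth} the cost is a sum of two positive-semidefinite quadratics, so the minimum is unique iff $\ker(\tilde{\A}\T\tilde{\A})\cap\ker(D_2\T D_2)=\ker\tilde{\A}\cap\ker D_2=\{0\}$. Here $\tilde{\A}\tilde{\w}=0$ decouples across the $n$ output channels into $\xi_t\T\w_t^{(i)}=0$ for each channel $i$ and all $t$, and membership in $\ker D_2$ makes each channel affine, $\w_t^{(i)}=a^{(i)}+(t-1)b^{(i)}$. Thus $\ker\tilde{\A}\cap\ker D_2\neq\{0\}$ exactly when some channel admits $a,b\inspace{n+m}$, not both zero, with $\xi_t\T\big(a+(t-1)b\big)=0$ for all $t$. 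The proposition therefore comes down to the equivalence: there is such an \emph{affine} annihilator of the regressor sequence if and only if there is a nonzero \emph{constant} one, i.e. a $v$ with $C^{xu}_t v=0$ for all $t$.

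That last equivalence is the crux, and I expect it to be the main obstacle. The direction from a constant to an affine annihilator is trivial; for the other direction I would exploit the time-variation of the regressors — evaluate $\xi_t\T a+(t-1)\xi_t\T b=0$ at indices where $\{\xi_t\}$ spans $\mathbb{R}^{n+m}$, propagate those relations to the remaining indices, and argue that the $b$-contribution must collapse, leaving $a$ as a constant annihilator. This is the step that genuinely uses that the regressor sequence varies in time rather than being static, and I would expect it to require some mild regularity of that variation (e.g. that the times at which $\{\xi_t\}$ spans $\mathbb{R}^{n+m}$ are not too few or too clustered), since otherwise a purely affine annihilator with $b\neq 0$ can coexist with the absence of any constant one. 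Finally, \labelcref{eq:pwlinear} carries the polyhedral penalty $\lambda\sum_t\normt{(D_2\tilde{\w})_t}$ in place of a quadratic; there I would replace the Hessian-nullspace step by the observation that two distinct minimizers make the segment joining them optimal, which forces $\tilde{\A}$ applied to their difference to vanish and the sum of norms along the segment to be affine, and then — with the extra care the polyhedral geometry requires — reduce to the same intersection condition $\ker\tilde{\A}\cap\ker D_2\neq\{0\}$ treated above.
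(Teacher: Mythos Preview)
Your decomposition matches the paper's: uniqueness holds iff $\ker\tilde{\A}\cap\ker D_2=\{0\}$, with $\ker D_2$ consisting of the affine-in-$t$ trajectories $\w_t=a+(t-1)b$. The paper, however, only examines the restricted subfamily $\tilde v$ with $\w_t=t\bar v$ and $\bar v$ built by repeating a single $v\inspace{n+m}$ across the $n$ output blocks, and shows that such a $\tilde v$ lies in $\ker\tilde{\A}$ iff $C^{xu}_t v=0$ for all $t$. That establishes the easy direction ($\exists\,v\Rightarrow$ non-unique) but never argues that \emph{every} nonzero element of $\ker\tilde{\A}\cap\ker D_2$ reduces to this special form, which is what the hard direction needs. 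You have located precisely the step the paper glosses over: passing from a general affine annihilator $a+(t-1)b$ to a constant one.

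Your instinct that this reduction requires extra hypotheses is correct; without them the equivalence fails. Take $n=m=1$, $T=3$, $\xi_1=(1,0)$, $\xi_2=(0,1)$, $\xi_3=(1,1)$: the $\xi_t$ span $\mathbb{R}^2$, so no nonzero $v$ has $C^{xu}_t v=0$ for all $t$, yet $a=(0,2)$, $b=(1,-2)$ give $\xi_t\T\big(a+(t-1)b\big)=0$ for $t=1,2,3$, hence $\ker\tilde{\A}\cap\ker D_2\neq\{0\}$ and the minimizer of \labelcref{eq:smooth} is non-unique. Thus the direction ``no such $v\Rightarrow$ unique'' cannot be completed as stated, and the paper's own argument shares this gap. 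Your separate handling of \labelcref{eq:pwlinear} via segment optimality is also more careful than the paper, which invokes Hessians for a non-differentiable regularizer.
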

\begin{proof}
    Again, the cost function is a sum of two convex terms and for a global minimum to be non-unique, the Hessians of the two terms must have intersecting nullspaces.
    In the limit $\lambda \rightarrow \infty$ the regularization term reduces to a linear constraint set, allowing only parameter vectors that lie along a line through time. Let $\tilde{v} \neq 0$ be such a vector, parametrized by $t$ as $\tilde{v} = \bmatrixx{\bar{v}\T & 2\bar{v}\T & \cdots & T\bar{v}\T}\T \inspace{TK}$ where $\bar{v} = \operatorname{vec}(\left\{v\right\}_1^N) \inspace{K}$ and $v$ is an arbitrary vector $\inspace{n+m}$. $\tilde{v} \in \operatorname{null}{(\tilde{\A}\T \tilde{\A})}$ implies that the loss is invariant to the pertubation $\alpha \tilde{v}$ to $\tilde{k}$ for an arbitrary $\alpha \inspace{}$.  $(\tilde{\A}\T \tilde{\A})$ is given by $\operatorname{blkdiag}(\left\{I_n \otimes C^{xu}_t \right\}_1^T)$ which means that $\tilde{v} \in \operatorname{null}{(\tilde{\A}\T \tilde{\A})} \Longleftrightarrow  \alpha t (I_n \otimes C^{xu}_t) \bar{v} = 0 \; \forall (\alpha,t) \Longleftrightarrow \bar{v} \in \operatorname{null}{(I_n \otimes C^{xu}_t)} \;\forall t$, which implies $v \in \operatorname{null}{C^{xu}_t}$ due to the block-diagonal nature of $I_n \otimes C^{xu}_t$

\end{proof}

For the LTI problem to be well-posed, the system must be identifiable and the input $u$ must be persistently exciting of sufficient order~\cite{johansson1993system}.


\section{Example -- Jump-linear system}
\cmt{Add two link example where a stiff contact is established and there is a sign change in control signal and hence friction. Apply pwconstant method and maybe scrap linear system example. Fig 3 can be reproduced with analytical jacobian as comparison. Mention in abstract that we then evaluate on one non-linear but smooth system (pendcart) and one non-smooth system (twolink).}
\recmt{Only linear systems considered (pend-cart actually not linear.} \cmt{Highligt that pendcart is nonlinear or introduce simple robotic example. Mention early that an LTV system along a trajectory does not generalize far for a nonlinear system, hence the use of KL constraint.}

We now consider a simulated example. We generate a state sequence from the following LTV system, where the change in dynamics, from
$$A_t = \left[
\begin{array}{cc}
0.95 & 0.1 \\
0.0 & 0.95 \\
\end{array}
\right], \quad B_t = \left[
\begin{array}{c}
0.2 \\
1.0 \\
\end{array}
\right]
$$
to $$A_t = \left[
\begin{array}{cc}
0.5 & 0.05 \\
0.0 & 0.5 \\
\end{array}
\right], \quad B_t = \left[
\begin{array}{c}
0.2 \\
1.0 \\
\end{array}
\right]
$$
occurred at $t=200$.
The input was Gaussian noise of zero mean and unit variance, state transition noise and measuremet noise of zero mean and $\sigma = 0.2$ were added.
\Cref{fig:ss} depicts the estimated coefficients in the dynamics matrices for a value of $\lambda$ chosen using the L-curve method~\cite{hansen1994regularization}.
\begin{figure}
    \centering
    \setlength{\figurewidth}{0.99\linewidth}
    \setlength{\figureheight }{5.5cm}
    \input{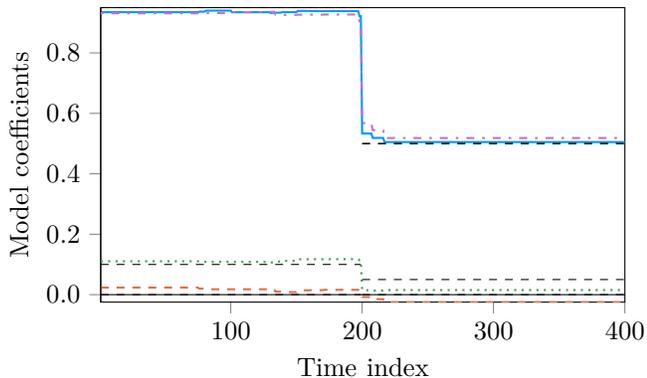}
    \figurecaptionreduction
    \caption{Piecewise constant state-space dynamics. True values are shown with dashed, black lines. Gaussian state-transition and measurement noise with $\sigma = 0.2$ were added. \recmt{Hard to read, maybe showing error is better.}}
    \label{fig:ss}
\end{figure}

\section{Example -- Non-smooth robot arm with stiff contact}
To illustrate the ability of the proposed models to represent the non-smooth dynamics along a trajectory of a robot arm, we simulate a two-link robot with discontinuous Coulomb friction. We also let the robot establish a stiff contact with the environment to illustrate both strengths and weaknesses of the modeling approach.

The state of the robot arm consists of two joint coordinates, $q$, and their time derivatives, $\dot q$. \Cref{fig:robot_train} illustrates the state trajectories, control torques and simulations of a model estimated by solving~\labelcref{eq:pwconstant}. The figure clearly illustrates that the model is able to capture the dynamics both during the non-smooth sign change of the velocity, but also during establishment of the stiff contact. The learned dynamics of the contact is however time-dependent, which is illustrated in \Cref{fig:robot_val}, where the model is used on a validation trajectory where a different noise sequence was added to the control torque. Due to the novel input signal, the contact is established at a different time-instant and as a consequence, there is an error transient in the simulated data.
\begin{figure*}[htp]
    \centering
    \setlength{\figurewidth}{0.495\linewidth}
    \setlength{\figureheight }{4cm}
    \input{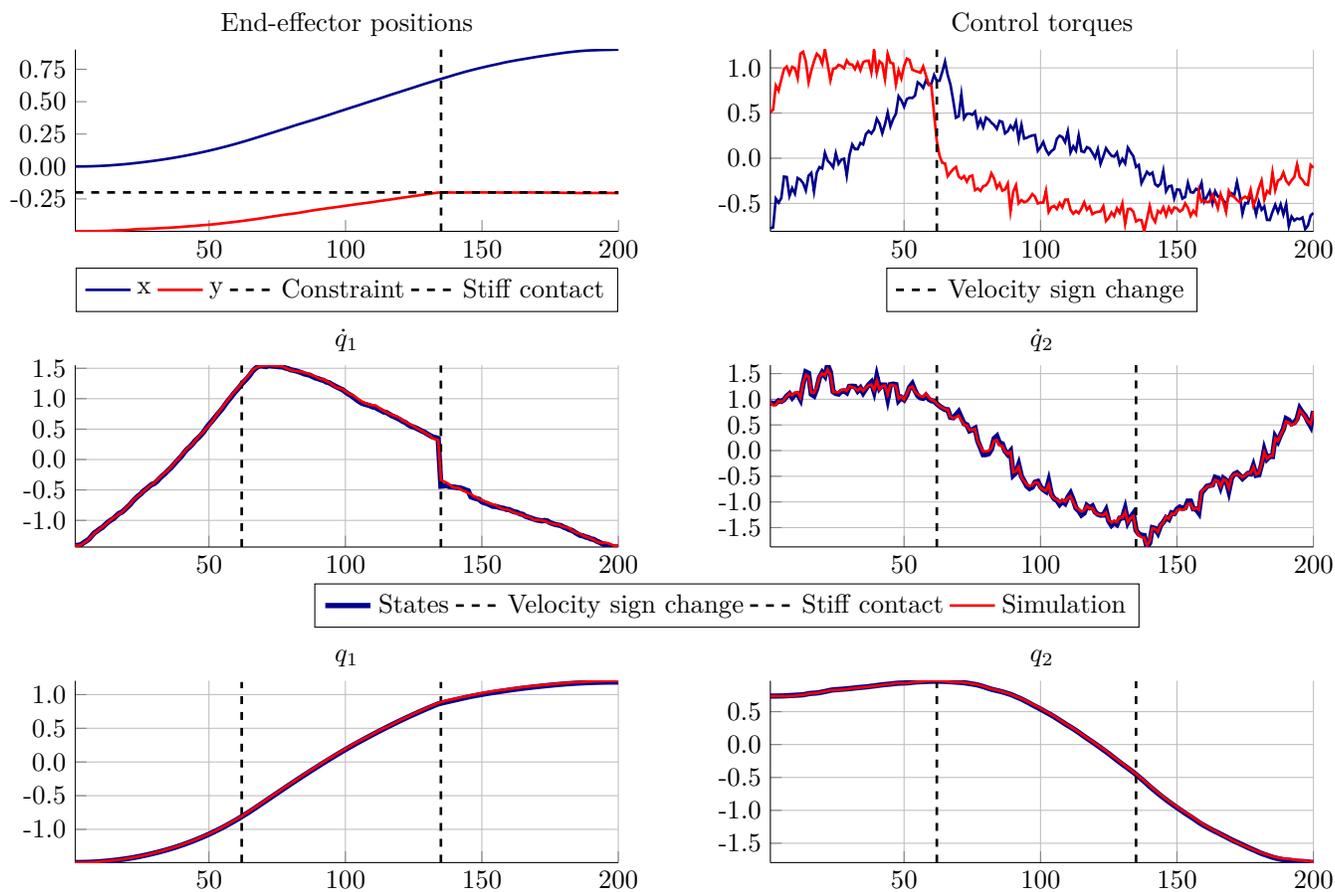}
    \caption{Simulation of non-smooth robot dynamics with stiff contact -- training data vs. sample time index. The sign change in velocity, and hence a discontinuous change in friction torque, occurs in the time interval 50-100 and the contact is established in the time interval 100-150. For numerical stability, all time-series are normalized to zero mean and unit variance, hence, the original velocity zero crossing is explicitly marked with a dashed line. The control signal plot clearly indicates the discontinuity in torque around the unnormalized zero crossing of $\dot{q}_2$.}
    \label{fig:robot_train}
\end{figure*}
\begin{figure*}[htp]
    \centering
    \setlength{\figurewidth}{0.495\linewidth}
    \setlength{\figureheight }{4cm}
    \input{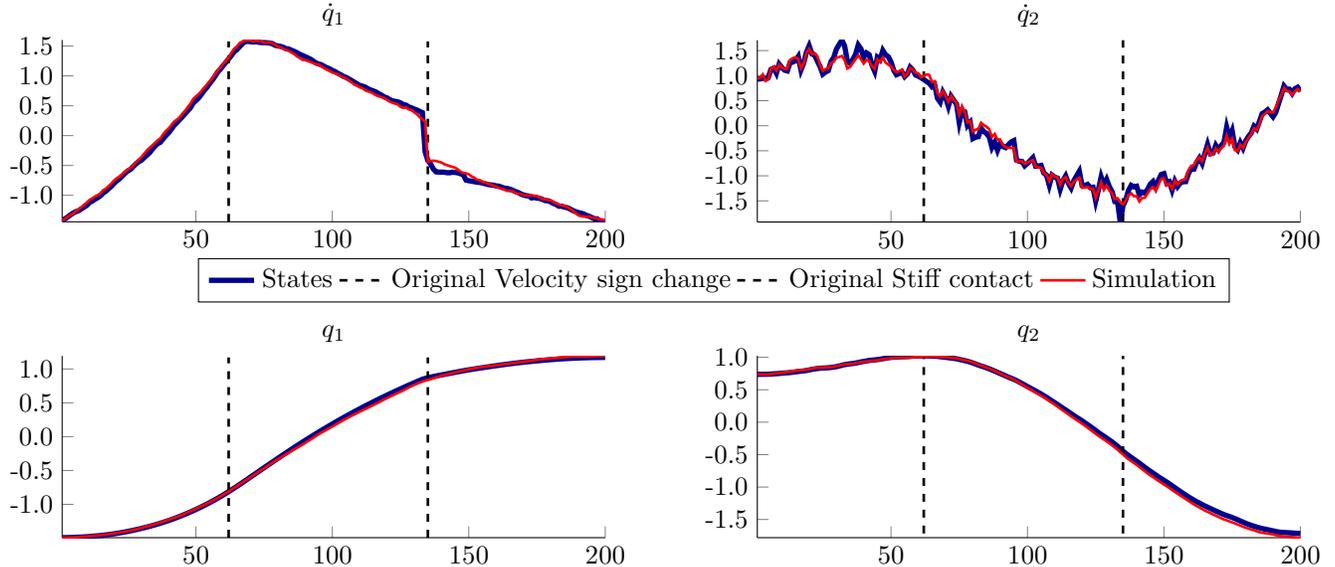}
    \caption{Simulation of non-smooth robot dynamics with stiff contact -- validation data vs. sample time index. The dashed lines indicate the event times for the training data, highlighting that the model is able to deal effortless with the non-smooth friction, but inaccurately predicts the time evolution around the contact event which now occurs at a slightly different time instance.}
    \label{fig:robot_val}
\end{figure*}

\section{Example -- Reinforcement learning} \label{sec:rl}
\recmt{More details needed}
In this example, we use the proposed methods to identify LTV dynamics models for
reinforcement learning. The goal of the task is to dampen oscillations of a pendulum attached to a moving
cart by means of moving the cart, with bounds on the control signal and a quadratic
cost on states and control. Due to the nonlinear nature of the pendulum dynamics,
linear expansions of the dynamics in the upward (initial) position and downward (final) position have poles
on opposite sides of the imaginary axis. To this end, we employ a reinforcement-learning framework inspired by~\cite{levine2013guided}, where we perform a series of rollouts whereafter each we
1) fit a dynamics model along the last obtained trajectory,
2) optimize the cost function under
the model using iterative LQG (differential dynamic programming),\footnote{Implementation made available at
\href{github.com/baggepinnen/DifferentialDynamicProgramming.jl}{github.com/baggepinnen/DifferentialDynamicProgramming.jl}} an algorithm that calculates the value function exactly under the LTV dynamics and a quadratic expansion of the cost function.
In order to stay close to the validity region of the linear model, we put bounds on the deviation between each new trajectory and the last trajectory.
We compare three different models;
the ground truth system model, an LTV model (obtained by solving \labelcref{eq:smooth}) and an LTI model.
The total cost over $T=500$ time steps is shown as a function of learning iteration
in~\cref{fig:ilc}. The figure illustrates how the learning procedure reaches the
optimal cost of the ground truth model when an LTV model is used, whereas when
using an LTI model, the learning diverges. The figure further illustrates that if
the LTV model is fit using a prior (\cref{sec:kalmanmodel}), the learning speed is increased. The prior in this case was constructed from the true system model, linearized around the last trajectory. This strategy is unavailable in a real application, but the experiment serves as an indication of the effectiveness of inclusion of a prior in this example. Future work is targeting the incremental estimation of these priors.

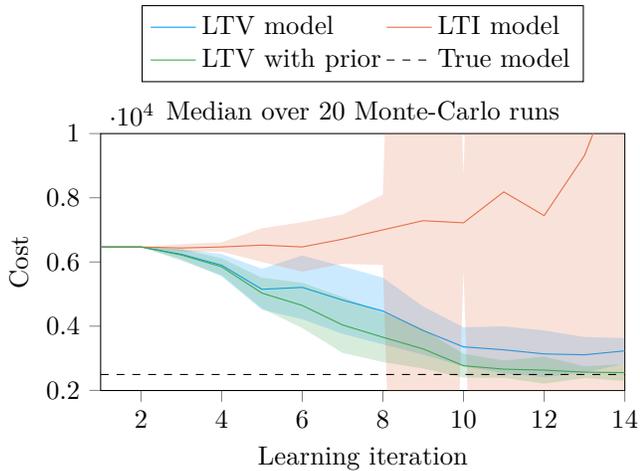
\begin{figure}[htp]
    \centering
    \setlength{\figurewidth}{0.99\linewidth}
    \setlength{\figureheight }{5cm}
\begin{tikzpicture}

\definecolor{color1}{rgb}{0.88887350027252,0.43564919034819,0.278122936141944}
\definecolor{color0}{rgb}{0,0.605603161175225,0.978680117569607}
\definecolor{color2}{rgb}{0.242224297852199,0.64327509315763,0.304448651534115}

\begin{axis}[
title={Median over 20 Monte-Carlo runs},
xlabel={Learning iteration},
ylabel={Cost},
xmin=1, xmax=14,
ymin=1995.4298956103, ymax=10000,
width=\figurewidth,
height=\figureheight,
tick align=outside,
tick pos=left,
legend style={at={(0.5,1.2)},anchor=south,legend columns=2},
legend cell align={left},
legend entries={{LTV model},{LTI model},{LTV with prior},{True model}}
]
\path [fill=color0, fill opacity=0.2] (axis cs:1,6467.89336667288)
--(axis cs:1,6467.74785554)
--(axis cs:2,6467.71983149019)
--(axis cs:3,6083.79304674676)
--(axis cs:4,5552.32630040257)
--(axis cs:5,4508.85503844202)
--(axis cs:6,4216.56451313534)
--(axis cs:7,3760.41973254615)
--(axis cs:8,3433.07343245832)
--(axis cs:9,3114.34845851781)
--(axis cs:10,2746.73053116261)
--(axis cs:11,2532.72928606864)
--(axis cs:12,2406.54154967724)
--(axis cs:13,2554.63848084459)
--(axis cs:14,2830.53984180791)
--(axis cs:14,3631.13890488011)
--(axis cs:14,3631.13890488011)
--(axis cs:13,3660.91540210268)
--(axis cs:12,3865.78714095652)
--(axis cs:11,3995.05240422497)
--(axis cs:10,3964.06066529291)
--(axis cs:9,4614.54536670643)
--(axis cs:8,5511.37483937252)
--(axis cs:7,5865.43395337106)
--(axis cs:6,6200.5529950399)
--(axis cs:5,5789.25942173269)
--(axis cs:4,6247.14060160901)
--(axis cs:3,6398.2096054346)
--(axis cs:2,6467.88160955626)
--(axis cs:1,6467.89336667288)
--cycle;

\addplot [color0]
table {%
1 6467.82061110644
2 6467.80072052323
3 6241.00132609068
4 5899.73345100579
5 5149.05723008736
6 5208.55875408762
7 4812.9268429586
8 4472.22413591542
9 3864.44691261212
10 3355.39559822776
11 3263.89084514681
12 3136.16434531688
13 3107.77694147363
14 3230.83937334401
};
\path [fill=color1, fill opacity=0.2] (axis cs:1,6467.85463420357)
--(axis cs:1,6467.74850354936)
--(axis cs:2,6467.30991947214)
--(axis cs:3,6315.12085985469)
--(axis cs:4,6322.32194235272)
--(axis cs:5,6000.52285532367)
--(axis cs:6,5697.52471480955)
--(axis cs:7,5943.29712499184)
--(axis cs:8,5906.22230384635)
--(axis cs:9,-32873.1990217889)
--(axis cs:10,5736.12520079924)
--(axis cs:11,-32392.8300679889)
--(axis cs:12,-46484.2643515633)
--(axis cs:13,-49117.1432332289)
--(axis cs:14,-48157.2936049971)
--(axis cs:14,73205.8855164305)
--(axis cs:14,73205.8855164305)
--(axis cs:13,67748.8108090554)
--(axis cs:12,61370.2531043991)
--(axis cs:11,48760.1554319038)
--(axis cs:10,8703.10980038211)
--(axis cs:9,47445.9095685239)
--(axis cs:8,8094.20785346361)
--(axis cs:7,7478.17017047317)
--(axis cs:6,7239.24244180404)
--(axis cs:5,7049.75353858485)
--(axis cs:4,6613.39519683911)
--(axis cs:3,6556.60190224514)
--(axis cs:2,6467.76558359311)
--(axis cs:1,6467.85463420357)
--cycle;

\addplot [color1]
table {%
1 6467.80156887647
2 6467.53775153262
3 6435.86138104992
4 6467.85856959592
5 6525.13819695426
6 6468.38357830679
7 6710.7336477325
8 7000.21507865498
9 7286.35527336752
10 7219.61750059068
11 8183.66268195745
12 7442.99437641792
13 9315.83378791325
14 12524.2959557167
};

\path [fill=color2, fill opacity=0.2] (axis cs:1,6467.8802891823)
--(axis cs:1,6467.68949345681)
--(axis cs:2,6467.59183527688)
--(axis cs:3,6039.12746559762)
--(axis cs:4,5587.67812664723)
--(axis cs:5,4554.0255889356)
--(axis cs:6,3939.04480375798)
--(axis cs:7,3168.2224702434)
--(axis cs:8,2882.6438663892)
--(axis cs:9,2680.56742721808)
--(axis cs:10,2395.22970651177)
--(axis cs:11,2392.21668718494)
--(axis cs:12,2208.19522012647)
--(axis cs:13,2384.14089194626)
--(axis cs:14,2293.78277584915)
--(axis cs:14,2812.73849565972)
--(axis cs:14,2812.73849565972)
--(axis cs:13,2750.26609376802)
--(axis cs:12,3053.5654671019)
--(axis cs:11,2928.96228230061)
--(axis cs:10,3140.37684237313)
--(axis cs:9,3901.14868702784)
--(axis cs:8,4429.36359425513)
--(axis cs:7,4908.14932963442)
--(axis cs:6,5351.65114546336)
--(axis cs:5,5506.64126115527)
--(axis cs:4,6114.66758111555)
--(axis cs:3,6396.98999231075)
--(axis cs:2,6467.8787317325)
--(axis cs:1,6467.8802891823)
--cycle;

\addplot [color2]
table {%
1 6467.78489131955
2 6467.73528350469
3 6218.05872895418
4 5851.17285388139
5 5030.33342504543
6 4645.34797461067
7 4038.18589993891
8 3656.00373032216
9 3290.85805712296
10 2767.80327444245
11 2660.58948474277
12 2630.88034361418
13 2567.20349285714
14 2553.26063575443
};
\addplot [black, dashed]
table {%
1 2494.28736951288
14 2494.28736951288
};
\end{axis}

\end{tikzpicture}
    \caption{Reinforcement learning example. Three different model types are used to iteratively optimize the trajectory of a pendulum on a cart. Due to the nonlinear nature of the pendulum dynamics, linear expansions of the dynamics in the upward and downward positions have poles on opposite sides of the imaginary axis, why the algorithm fails with an LTI model.}
    \label{fig:ilc}
\end{figure}

\section{Discussion}

This article presents methods for estimation of linear, time-varying models. The methods presented extend directly to nonlinear models that remain \emph{linear in the parameters}. When estimating an LTV model from a trajectory obtained from a nonlinear system, one is effectively estimating the linearization of the system around that trajectory. A first-order approximation to a nonlinear system is not guaranteed to generalize well as deviations from the trajectory become large. Many non-linear systems are, however, approximately \emph{locally} linear, such that they are well described by a linear model in a small neighborhood around the linearization/operating point. For certain methods, such as iterative learning control and trajectory centric reinforcement learning, a first-order approximation to the dynamics is used for efficient optimization, while the validity of the approximation is ensured by incorporating penalties or constraints between two consecutive trajectories.

The methods presented allow very efficient learning of this first-order approximation due to the prior belief over the nature of the change in dynamics parameters, encoded by the regularization terms. By postulating a prior belief that the dynamics parameters change in a certain way, less demand is put on the data required for identification. The identification process will thus not interfere with normal operation in the same way as if excessive noise would be added to the input for identification purposes. This allows learning of flexible, over-parametrized models that fit available data well. This makes the proposed identification methods attractive in applications such as guided policy search (GPS)~\cite{levine2013guided, levine2015learning} and non-linear iterative learning control (ILC)~\cite{bristow2006survey}, where they can lead to dramatically decreased sample complexity.

When faced with a system where time-varying dynamics is suspected and no particular
knowledge regarding the dynamics evolution is available, or when the dynamics are known to vary slowly,
a reasonable first choice of algorithm is~\labelcref{eq:smooth}. It is also by
far the fastest of the proposed methods due to the Kalman-filter implementation
of~\cref{sec:kalmanmodel}.\footnote{The Kalman-filter implementation is often
several orders of magnitude faster than solving the optimization problems with
an iterative solver.} Example use cases include when dynamics are changing with a
continuous auxiliary variable, such as temperature, altitude or velocity.
If a smooth parameter drift is found to correlate with an auxiliary variable,
LPV-methodology can be employed to model the dependence explicitly.

Dynamics may change abruptly as a result of, e.g., system failure, change of operating mode,
or when a sudden disturbance enters the system, such as a policy change affecting a market or
a window opening affecting the indoor temperature. The identification method \labelcref{eq:pwconstant} can be employed
to identify when such changes occur, without specifying a priori how many changes are expected.

For simplicity, the regularization weights were kept as simple scalars in this article.
However, all terms $\lambda\normt{\Delta k}^2 = (\Delta k)\T (\lambda I) (\Delta k)$
can be generalized to $(\Delta k)\T \Lambda (\Delta k)$, where $\Lambda$ is an arbitrary positive definite matrix.
This allows incorporation of different scales for different variables with little added implementation complexity.

\section{Conclusions}
We have proposed a framework for identification of linear, time-varying models along trajectories of nonlinear dynamical systems using convex optimization. We showed how a Kalman smoother can be used to estimate the dynamics efficiently in a few special cases, and demonstrated the use of the proposed LTV models on three examples, highlighting their efficiency for trajectory-centric, model-based reinforcement learning, iterative learning control (ILC), and jump-linear system identification. We have also demonstrated the ability of the models to handle non-smooth friction dynamics as well as analyzed the identifiability of the models.


\bibliography{bibtexfile}{}
\bibliographystyle{IEEEtran}
\end{document}